\newtheorem{definition}{Definition}
\newtheorem{theorem}{Theorem}[section]
\newtheorem*{remark}{Remark}
\title{\Large \textbf{DeTEcT: Dynamic and Probabilistic Parameters Extension}\\ \large Modelling wealth distribution in token economies with time-dependent parameters}
\author{R. Sadykhov, Dr.G. Goodell and Prof.P. Treleaven}
\affil{University College London}
\date{February 2024}
\begin{document}

\begin{titlepage}
\maketitle
\thispagestyle{empty}
\begin{abstract}
This paper presents a theoretical extension of the DeTEcT framework proposed by Sadykhov et al. \cite{DeTEcT}, where a formal analysis framework was introduced for modelling wealth distribution in token economies. DeTEcT is a framework for analysing economic activity, simulating macroeconomic scenarios, and algorithmically setting policies in token economies. This paper proposes four ways of parametrizing the framework, where dynamic vs static parametrization is considered along with the probabilistic vs non-probabilistic. Using these parametrization techniques, we demonstrate that by adding restrictions to the framework it is possible to derive the existing wealth distribution models from DeTEcT.\par
In addition to exploring parametrization techniques, this paper studies how money supply in DeTEcT framework can be transformed to become dynamic, and how this change will affect the dynamics of wealth distribution. The motivation for studying dynamic money supply is that it enables DeTEcT to be applied to modelling token economies without maximum supply (i.e., Ethereum), and it adds constraints to the framework in the form of symmetries.
\end{abstract}
\end{titlepage}

\section{Introduction}
Token economies are economies that have a medium for valuation, transaction and value storage in the form of tokens (i.e., currency native to the economy). These tokens can be minted (i.e., new tokens are created) and burned (i.e., existing tokens are destroyed), while the study of tokens and token economies is generally referred to as \emph{Tokenomics}.\par
Acting under the assumption that tokens are a mechanism for storage of wealth and representation of the value of scarce resources, we can define \emph{tokenomics} as the study of the efficient allocation of wealth in a token economy \cite{ISOVocabularyEconomics, WealthOfNations}. Some of the questions that tokenomics aims to address are: how does an economic system provision and allocate scarce resources, how does the economy interact with the external economic systems and stimuli, what guides the behaviour of economic participants, and what is the ``efficiency'' of these processes?\par
In an attempt to answer these question, we have previously defined a formal analysis framework called DeTEcT \cite{DeTEcT}, which models interactions between groups of heterogeneous agents in token economies, simulates wealth distribution, and can be used to propose policies to be implemented in a token economy to achieve a desired wealth distribution. DeTEcT helps to define an economic system as a robust mathematical construction, facilitating theoretical study of the system, while maintaining flexibility to be adjusted to analyse economies with specific traits (e.g., economies with different rules of interactions between agents).\par
However, despite an already wide range of applications, the framework still has some limitations, such as constant maximum supply of tokens, exclusively static parameters, or the limitations of numerical solution method. In this paper we aim to address some of the limiting factors to make DeTEcT a more flexible framework, without loosing any mathematical robustness.\par
\subsection{Scope}
The aim of this paper is to define different possible configurations for the parametrization of the framework, demonstrate how existing wealth distribution models can be derived from DeTEcT, and define a method for modelling an economy with a dynamic maximum supply (i.e., maximum supply that varies in time). We address the first two questions in section \ref{sec:DerivingWealthModels} as these questions are interrelated, while we address the last question in section \ref{sec:DynamicMoneySupply}. Note that these sections are not arranged in a specific order, as we believe each of these questions is equally important.\par
We start section \ref{sec:DerivingWealthModels} by demonstrating how to parametrize the framework with parameters that change in time (i.e., dynamic parameters) and with parameters that are defined using probabilistic techniques (i.e., probabilistic parameters). The section concludes with us deriving the existing wealth distribution models from DeTEcT framework. The importance of this link is that the existing wealth distribution models we refer to have the support of empirical data, in particular the wealth distribution model with individual saving propensities \cite{MoneyInGasLikeMarkets}, as it results in a Pareto tail distribution which is observed in real-world economies \cite{HumanDevelopmentReport}.\par
Section \ref{sec:DynamicMoneySupply} expands the application of our framework to economies where maximum supply changes in time (e.g., Ethereum economy \cite{Ethereum}), and explores time translation symmetry in our framework.

\section{Related Work}
This paper expands on the mathematics of DeTEcT presented in Sadykhov at al. \cite{DeTEcT}, so the definitions used here are stated in that paper. In this section we will briefly reacquaint ourselves with the key concepts of DeTEcT, and cover some of the existing wealth distribution models from the literature.
\subsection{DeTEcT Framework}
\label{sec:RelatedWork:DeTEcT}
DeTEcT is a formal analysis framework that models wealth distribution in token economies. The fundamental building block for DeTEcT is clustering of agents in the economy into agent categories and categorizing possible interactions between different agent categories.\par
For example, assume an economy has an agent that manufactures a good (e.g., car) and sells it on to other agents (e.g., retail customers). We can create two agent categories in such an economy: \emph{Car Manufacturer} and \emph{Household}. We see that \emph{Car Manufacturer} interacts with \emph{Household} through the sale of cars. So, we can describe an interaction between these agent categories as \emph{Car Sale}.\par
Each agent is assumed to have some wealth, and has a mapping called individual wealth function, $f(a, t)=w$, where $a\in\Lambda$ is the label of the agent with $\Lambda$ being the set of all agents in the economy, $t$ is the time step, and $w$ is the wealth of the agent at time step $t$. Since each agent category is a set of agents, we have a wealth function $F(A,t)=W$ for each agent category $A$ at time step $t$, and it is a sum of all individual wealth functions of agents who are considered to be of agent category $A$ (i.e., $a\in A$) at time $t$.\par
However, we may have an economy where agents do not have a constant agent category and they can change between different agent categories. Therefore, the wealth of an agent category $A$ will change if agent $a$ now belongs to agent category $A^{\prime}$. We define this wealth reallocation from one agent category to another without the agent interacting with other agents as rotation.\par
Agent categories, interactions and rotations, together, are referred to as the tokenomic taxonomy, as it defines the possible routes for wealth movements. This allows us to create a complete set of possible wealth movements, and to construct the model for possible wealth reallocation between agent categories.\par
The motivation behind clustering agents into agent categories consists of two points: it allows for quicker processing as we model the macroeconomic state of the economy and therefore work with smaller parameter set, and that in most economies the policies are directed at groups of economic participants (e.g., corporate tax for companies vs income tax for households). Tokenomic taxonomy enables DeTEcT to be used for finding policies to be implemented in an economy, where the policies are directed at well-defined categories of agents.\par
Before we review DeTEcT framework, we must point out that we will be referring to distribution of tokens between agents in the economy as \emph{wealth distribution} rather than token distribution. The reason for this destinction is that DeTEcT can be applied to the modelling financial systems with different architectures, and is not limited to modelling only the token economies, despite the scope of this paper mainly concerning with the token economies. Additionally, in section \ref{sec:DerivingWealthModels} we discuss how DeTEcT links to existing wealth distribution models, and to make a smooth transition from our framework to existing works in this field, we refer to the distribution of tokens as wealth distribution, subject to the assumption that tokens in the closed economic systems act as a measure of wealth.\par
To summarize the review of DeTEcT, we define the compartmental dynamical system at the core of our framework. Assuming an economy has a well-defined maximum supply, we state that the sum of all wealth functions of all agent categories in the economy always equals to the maximum supply,
\begin{equation}
\label{eq:RelatedWork:ConstantMaxSupply}
	M = \sum_{A\in E_{t}}F(A,t) \quad \forall t\in T,
\end{equation}
where $M$ is the maximum supply, $E_{t}$ is the pseudo-partition of agent categories (i.e., the set of all agent categories), and $T$ is the set of all discrete time steps.\par
Note that ususally in token economies, there exists a mechanism that issues (i.e., mints) new tokens to agents, and a place where tokens are deposited when they are disposed of (i.e., burned). In the context of DeTEcT, we refer to these as separate agent categories denoted as $B\in E_{t}$ (i.e., Control Mechanism) and $D\in E_{t}$ (i.e., Token Dump) respectively. Control Mechanism is the required agent category for a token economy in DeTEcT since it is reponsible for initial token distribution and it is assumed to hold all unissued tokens (i.e., reserves). We refer to $M$ as the maximum supply as it is the total number of all tokens that could be issued (including the tokens that haven't been issued yet), while the concept of current circulating supply can be expressed in DeTEcT terms as $S(t)=M-F(B,t)$, where $S(t)$ is the circulating supply at some time $t\in T$.\par
From equation \ref{eq:RelatedWork:ConstantMaxSupply} we derive the conservation of wealth in time,
\begin{equation}
\label{eq:RelatedWork:ConservationPrinciple}
	0 = \sum_{A\in E_{t}}\frac{\Delta F(A,t)}{\Delta t}.
\end{equation}
Also, as described in the ``\emph{Decentralized Token Economy Theory (DeTEcT)}'' \cite{DeTEcT}, we define interactions between agent categories $A, A^{\prime}\in E_{t}$ through the parameters $\beta_{AA^{\prime}}$, known as the interaction rates, while the rotations are defined by $\gamma_{AA^{\prime}}$, which are called the rotation rates. Putting all of these ingredients together, we obtain the generalized form of the dynamical system with constant parameters that models the dynamics of wealth distribution,
\begin{equation}
\label{eq:RelatedWork:DSGeneralizedConstantParams}
	\frac{\Delta}{\Delta t}[\vv{F}(t)] = \frac{1}{M}\vv{F}(t)\odot[\mathcal{B}\cdot\vv{F}(t)]+\Gamma\cdot\vv{F}(t) \quad t\in T,
\end{equation}
where $\cdot$ is the matrix-vector product, $\odot$ is point-wise vector multiplication,
\begin{equation}
	(A\odot B)_{ij} = A_{i}B_{ij},
\end{equation}
$\vv{F}(t)$ is the vector of wealth functions at time $t$,
\begin{equation}
	\vv{F}(t) = (F(A_{1},t), ..., F(A_{n},t))^{T}, \quad A_{1},...,A_{n}\in E_{t},
\end{equation}
$\mathcal{B}$ is an antisymmetric matrix of interaction rates $\beta$,
\begin{equation}
	\mathcal{B} =
	\begin{bmatrix}
		0 & \beta_{A_{1}A_{2}} & \hdots & \beta_{A_{1}A_{n}} \\
		-\beta_{A_{1}A_{2}} & 0 & \hdots & \beta_{A_{2}A_{n}} \\
		\vdots & \vdots & \ddots & \vdots \\
		-\beta_{A_{1}A_{n}} & -\beta_{A_{2}A_{n}} & \hdots & 0\\
	\end{bmatrix},
\end{equation}
and $\Gamma$ is the matrix of rotation rates where each column sums up to zero,
\begin{equation}
	\Gamma = 
	\begin{bmatrix}
		-\gamma_{A_{1}} & \gamma_{A_{2}A_{1}} & \hdots & \gamma_{A_{n}A_{1}} \\
		\gamma_{A_{1}A_{2}} & -\gamma_{A_{2}} & \hdots & \gamma_{A_{n}A_{2}} \\
		\vdots & \vdots & \ddots & \vdots \\
		\gamma_{A_{1}A_{n}} & \gamma_{A_{2}A_{n}} & \hdots & -\gamma_{A_{n}}\\
	\end{bmatrix},
\end{equation}
such that the diagonal elements of $\Gamma$ are
\begin{equation}
	\gamma_{A_{m}} = \sum^{n}_{j\neq m}\gamma_{A_{m}A_{j}}.
\end{equation}
These are the fundamental blocks of our framework that we will be using and generalizing in later sections in this paper.
\subsection{Wealth Distribution Models}
\label{sec:RelatedWork:WealthDistributionModels}
In order to emulate the wealth distribution dynamics in a closed economic system, a family of kinetic theory models has been proposed \cite{KineticTheoryModelsForTheDistributionOfWealth}. These models were originally used to model random interactions between gas molecules in a closed container, but in the context of wealth distribution dynamics they are used to simulate random transactions between the participants (agents) of the economy to find the equilibrium of wealth distribution.\par
The models from this family follow a transaction rule,
\begin{equation}
\label{eq:RelatedWork:TradingRule}
\begin{split}
	& x^{\prime}_{i} = x_{i}-\Delta x, \\
	& x^{\prime}_{j} = x_{j}+\Delta x,
\end{split}
\end{equation}
where $i,j\in\mathbb{N}$ are the labels of the agents, $x_{i}$ and $x_{j}$ are the wealths of the agents respectively before the transaction, $x^{\prime}_{i}$ and $x^{\prime}_{j}$ are their wealths after the transaction, and $\Delta x$ is the transaction quantity. Based on this transaction rule, different models have been introduced where the term $\Delta x$ is used to model different behaviours of agents in an economy.
\subsubsection{Basic Model Without Saving}
The basic model, where agents do not preserve a portion of their wealth, has been proposed by Dragulescu at al. \cite{StatisticalMechanicsOfMoney}, where the trading rule becomes
\begin{equation}
\label{eq:RelatedWork:NoSavingModel}
\begin{split}
	& x^{\prime}_{i} = \epsilon(x_{i}+x_{j}), \\
	& x^{\prime}_{j} = \tilde{\epsilon}(x_{i}+x_{j}),
\end{split}
\end{equation}
with $\epsilon\sim U(0,1)$ and $\tilde{\epsilon}$ its complementary fraction (i.e., $\epsilon+\tilde{\epsilon}=1$). The form of the trading rule from equation \ref{eq:RelatedWork:TradingRule} is recovered by setting
\begin{equation}
	\Delta x = \tilde{\epsilon}x_{i}-\epsilon x_{j}.
\end{equation}
In a closed economic system with random transactions taking place according to the trading rule in equation \ref{eq:RelatedWork:NoSavingModel}, the wealth distribution becomes a Boltzmann distribution and leads to inequitable wealth distribution where a few agents hold majority of the wealth.
\subsubsection{Model With Constant Global Saving Propensity}
In most economic systems, the agents tend to preserve some proportion of their wealth which can be represented by a wealth distribution model through the introduction of a saving propensity parameter $0<\lambda<1$, which represents the proportion of wealth an agent saves, therefore, not using it in the transaction. This model has been introduced by Chakraborti et al. \cite{StatisticalMechanicsOfMoneySavingPropensity}, where the trading rule is defined as
\begin{equation}
\label{eq:RelatedWork:GlobalSavingPropensity}
\begin{split}
	& x^{\prime}_{i} = \lambda x_{i}+\epsilon(1-\lambda)(x_{i}+x_{j}), \\
	& x^{\prime}_{j} = \lambda x_{j}+\tilde{\epsilon}(1-\lambda)(x_{i}+x_{j}),
\end{split}
\end{equation}
with $\epsilon$ and $\tilde{\epsilon}$ defined as before. To recover the general trading rule in equation \ref{eq:RelatedWork:TradingRule}, the reallocated wealth is defined by
\begin{equation}
	\Delta x = (1-\lambda)[\tilde{\epsilon}x_{i}-\epsilon x_{j}].
\end{equation}
Note that $\lambda$ is the same regardless of the agent who undergoes transaction. Therefore, this model implicitly assumes that every agent has the same preference for how much wealth they will save before a transaction. The wealth distribution after simulating the random transactions with the rule in equation \ref{eq:RelatedWork:GlobalSavingPropensity} results in the Gamma distribution of wealth between the agents.
\subsubsection{Model With Individual Saving Propensities}
Individual saving propensities $\{\lambda_{i}:0<\lambda_{i}<1\}$ for agent $i$ can be introduced to add the individual preference of the agents to save a specific portion of their wealth before transacting. The trading rule with individual saving propensities proposed by Chatterjee et al. \cite{MoneyInGasLikeMarkets} is
\begin{equation}
\label{eq:RelatedWork:IndividualSavingPropensity}
\begin{split}
	& x^{\prime}_{i} = \lambda_{i}x_{i}+\epsilon[(1-\lambda_{i})x_{i}+(1-\lambda_{j})x_{j}], \\
	& x^{\prime}_{j} = \lambda_{j}x_{j}+\tilde{\epsilon}[(1-\lambda_{i})x_{i}+(1-\lambda_{j})x_{j}],
\end{split}
\end{equation}
where the general trading rule in equation \ref{eq:RelatedWork:TradingRule} is obtained by setting
\begin{equation}
	\Delta t = \tilde{\epsilon}(1-\lambda_{i})x_{i}-\epsilon(1-\lambda_{j})x_{j}.
\end{equation}
If multiple simulations are run using the trading rule in equation \ref{eq:RelatedWork:IndividualSavingPropensity} with different individual saving propensity settings each time, the average of the equilibrium wealth distributions is the Pareto distribution. Pareto distribution is commonly used in economics to model the wealth distribution in a society as it describes the disparity of wealth between the ``wealthy'' and the ``poor'' agents in the economy.
\subsubsection{Summary}
The wealth distribution models outlined here are well-established in the academic literature, and are commonly used for modelling generic macroeconomic principles. For example, the relaxation time for a wealth distribution to return to its equilibrium can be measured \cite{RelaxationInStatisticalManyAgentEconomyModels}, or in the case of the model with individual saving propensities, it is used to describe the Pareto principle \cite{ParetoPrinciple} that is observed in real-world economies.

\section{Parametrization Extension and Derivation of Existing Wealth Distribution Models}
\label{sec:DerivingWealthModels}
\subsection{Parameter Modification Techniques}
In our framework, one of the assumptions we used is that the parameters (i.e., interaction and rotation rates) are constant. This is useful when we either simulate wealth distribution based on some ``predefined trends'', or when we are trying to find the parameters of the dynamical system that allow the dynamical system to reach a desired wealth distribution (i.e., a desired attractor). However, if we drop this assumption and allow parameters to change at different time steps, we will see that DeTEcT becomes a broad framework that can be parametrized in a certain way to replicate existing wealth distribution models.\par
However, before obtaining other wealth distribution models from DeTEcT, we would like to introduce a taxonomy of different parametrizations that we can use to modify our original approach \cite{DeTEcT}.\par
We broadly categorize interaction and rotation rates using two features, namely whether the parameters are static or dynamic, and whether the parameters are probabilistic or deterministic. These two features allow us to cover all possible extensions that can be introduced to the DeTEcT framework. Table \ref{table:DerivingWealthModels:ParamsTaxonomy} demonstrates some of the applications of different parameter modifications, which we will examine on a case-by-case basis.\par
\begin{table}[h!]
\centering
\begin{tabular}{ |c|c|c| } 
\hline
& \textbf{Deterministic} & \textbf{Probabilistic} \\
\hline
\textbf{Static} & \begin{tabular}{@{}c@{}}Implied parameters for simulation, \\ Obtained from inverse propagation \end{tabular} & \begin{tabular}{@{}c@{}}Averaged parameters from \\ historical data \end{tabular} \\
\hline
\textbf{Dynamic} & \begin{tabular}{@{}c@{}}Changing parameters \\ based on incoming data\end{tabular} & \begin{tabular}{@{}c@{}}Stochastically simulated \\ parameters \end{tabular} \\
\hline
\end{tabular}
\caption{Applications of Parameter Modification Techniques}
\label{table:DerivingWealthModels:ParamsTaxonomy}
\end{table}
Furthermore, we can categorize dynamic parameters by whether they are proactive or reactive. By proactive dynamic parameters we mean that these parameters are either set in advance for every time step in the simulation, or are derived from a arbitrary function that does not consider any history of economic activity (i.e., the function is agnostic to the state of the economy). Reactive dynamic parameters are the parameters that are obtained from some arbitrary function that takes the state of the economy as an input to produce parameters for the next time step (e.g., a change in the sizes of incentives paid out to all agent categories follows the review of an inflation reading collected in the last time step).\par
\begin{table}[h!]
\centering
\begin{tabular}{ |c|c|c| } 
\hline
& \textbf{Deterministic} & \textbf{Probabilistic} \\
\hline
\textbf{Proactive} & \begin{tabular}{@{}c@{}}Predefined program of \\ changing parameters \end{tabular} & \begin{tabular}{@{}c@{}}Predefined program \\ of changing stochastic parameters\end{tabular} \\
\hline
\textbf{Reactive} & \begin{tabular}{@{}c@{}c@{}}Parameters generated \\ in response \\ to the state of economy\end{tabular} & \begin{tabular}{@{}c@{}c@{}}Stochastic parameters \\ generated \\ in response to the state of economy \end{tabular} \\
\hline
\end{tabular}
\caption{Applications of Dynamic Parameter Modification Techniques}
\label{table:DerivingWealthModels:DynamicParamsTaxonomy}
\end{table}
Table \ref{table:DerivingWealthModels:DynamicParamsTaxonomy} describes use cases for proactive and reactive dynamic parameters. Whether dynamic parameters are proactive or reactive depends on how $\mathcal{B}(t)$ and $\Gamma(t)$ are defined, but this categorization doesn't change the general form of the dynamical system for the case of the dynamic parameters. We make this distinction between dynamic parameters to demonstrate that there can be a more detailed taxonomy of the parametrization methods, but we will not expand on this further in this paper.\par
Since we will be looking at modifying the parameters of the dynamical system, we reiterate the original definition of interaction rates, that we have used before \cite{DeTEcT},
\begin{equation}
	\beta_{AA^{\prime}} = \frac{M}{\Delta t}\frac{\sum_{i_{AA^{\prime}}\in I_{AA^{\prime}}}\iota(i_{AA^{\prime}},t)}{F(A,t)F(A^{\prime},t)} \quad \forall t\in T,
\end{equation}
where $\Delta t$ is the difference between two consequent time steps, $i_{AA^{\prime}}$ are is an interaction type defined between agent categories $A,A^{\prime}\in E_{t}$, $I_{AA^{\prime}}$ is the set of all possible interaction types defined in the tokenomic taxonomy between $A$ and $A^{\prime}$, and $\iota(i_{AA^{\prime}},t)$ is all wealth redistributed between $A$ and $A^{\prime}$ due to interactions of type $i_{AA^{\prime}}$ that took place in the interval $(t-\Delta t, t)$.\par
Essentially, we can think of $\beta_{AA^{\prime}}$ as the net wealth redistributed between $A$ and $A^{\prime}$ expressed as the proportion of wealth that these agent categories have. Using the analogy from section \ref{sec:RelatedWork:DeTEcT}, let $k_{AA^{\prime}}$ be a purchase of a car, and $j_{AA^{\prime}}$ be a purchase of a truck - both of these are possible interaction types that can take place between \emph{Car Manufacturer} and \emph{Household}, while $\iota$ is the wealth that is reallocated from \emph{Household} to \emph{Car Manufacturer} for purchasing these goods, such that
\begin{equation}
\label{eq:DerivingWealthModels:IotaDefinition}
	D_{i_{AA^{\prime}}}(t)P_{i_{AA^{\prime}}}(t)\coloneqq\iota(i_{AA^{\prime}},t),
\end{equation}
where $D_{i_{AA^{\prime}}}(t)$ is the demand for interaction of type $i_{AA^{\prime}}$ (i.e., how many cars or trucks were purchased since the last time step), and $P_{i_{AA^{\prime}}}(t)$ is the price of each interaction (i.e., the price at which these goods were purchased).\par
We can now explore how by modifying the definitions above we can make DeTEcT more flexible.
\subsubsection{Static Deterministic Parameters}
Static deterministic parameters are the ones that we have defined in the section above. These are the basic building blocks upon which we can implement our modifications.\par
The fundamental assumption we use here is that the parameter values do not change in time, and in order for that to happen we must balance net wealth redistributed (i.e., $\iota$) and the wealth functions of the agent categories (i.e., $F(A,t)$ and $F(A^{\prime},t)$). We assume that at every time step the prices $P_{i_{AA^{\prime}}}(t)$ are readjusted against the given demands $D_{i_{AA^{\prime}}}(t)$ and wealth functions for that time period.These restrictions give rise to the following effects:
\begin{itemize}
	\item \textbf{Forward Propagation}: Inaccuracies in modelling real economies, as in real economies interaction and rotation rates will be changing in time. The parameters themselves are derived from arbitrary historical transaction data, which limits the possible cases. However, the numerical simulation with these ``simple'' parameters is very quick and straight forward.
	\item \textbf{Inverse Propagation}: If economy governors have the control over the price setting mechanisms, maintaining constant parameters is easy as it is done via setting appropriate prices. However, in the case of a free market economy, the parameters would not be maintained at a constant level as market participants will be free to set their own prices in response to changing demand. The numerical solution employed to perform inverse propagation must be run only once to obtain the desired values of interaction and rotation rates.
\end{itemize}
The interaction rates are defined as
\begin{equation}
	\beta_{AA^{\prime}} = \frac{M}{\Delta t}\frac{\sum_{i_{AA^{\prime}}\in I_{AA^{\prime}}}\iota(i_{AA^{\prime}},t)}{F(A,t)F(A^{\prime},t)} \quad \forall t\in T,
\end{equation}
while the rotation rates are just constants
\begin{equation}
	\gamma_{AA^{\prime}} = c, \quad c\in\mathbb{R}.
\end{equation}
\subsubsection{Static Probabilistic Parameters}
Static probabilistic parameters are an extension of the static deterministic parameters, where we add a term that represents a probability of success for each interaction type $i_{AA^{\prime}}$ and it tells us how many interactions of this type we expect at each time step. The effects of this parametrization technique are not much different to those of static deterministic parameters. The notable difference is that now we can use historical data to create a probability distribution for the demands $D_{i_{AA^{\prime}}}(t)$ and at each time step we aim to rebalance these ``expected'' demands with the prices.\par
The benefit of using static probabilistic parameters as opposed to static deterministic ones, is that we simulate the wealth distribution based on the historical data compared to some arbitrary demands. We can use any probability distribution to simulate the distribution of demands, but the only restriction we impose for now is that the choice of probability distribution should match the data format of the demand (i.e., demand cannot be $1.5$ if the interaction is \emph{Buying car}, but it can be $1.5kg$ if the interaction is \emph{Buying rice} for $\$100\textit{ per kg}$).\par
For example, assume that for each interaction type $i_{AA^{\prime}}\in I_{AA^{\prime}}$ there exists a probability $p_{i_{AA^{\prime}}}\in[0,1]$ that the interaction of this type takes place with the Bernoulli distribution $\mathcal{I}_{i_{AA^{\prime}}}\sim Bernoulli(p_{i_{AA^{\prime}}})$. Now we can write interaction rates as
\begin{equation}
	\beta_{AA^{\prime}} = \frac{M}{\Delta t}\frac{\sum_{i_{AA^{\prime}}\in I_{AA^{\prime}}}E[\mathcal{I}_{i_{AA^{\prime}}}]\iota(i_{AA^{\prime}}, t)}{F(A,t)F(A^{\prime}, t)} = \frac{M}{\Delta t}\frac{\sum_{i_{AA^{\prime}}\in I_{AA^{\prime}}}p_{i_{AA^{\prime}}}\iota(i_{AA^{\prime}}, t)}{F(A,t)F(A^{\prime}, t)}, \quad t\in T,
\end{equation}
where $p_{i_{AA^{\prime}}}$ acts as a weight for the summation over $\iota$. If we plug equation \ref{eq:DerivingWealthModels:IotaDefinition} into equation above, we get
\begin{equation}
	\beta_{AA^{\prime}} = \frac{M}{\Delta t}\frac{\sum_{i_{AA^{\prime}}\in I_{AA^{\prime}}}p_{i_{AA^{\prime}}}D_{i_{AA^{\prime}}}(t)P_{i_{AA^{\prime}}}(t)}{F(A,t)F(A^{\prime}, t)} = \frac{M}{\Delta t}\frac{\sum_{i_{AA^{\prime}}\in I_{AA^{\prime}}}E[\tilde{\mathcal{I}}_{i_{AA^{\prime}}}]P_{i_{AA^{\prime}}}(t)}{F(A,t)F(A^{\prime}, t)},
\end{equation}
where $\tilde{\mathcal{I}}_{i_{AA^{\prime}}}\sim B(D_{i_{AA^{\prime}}}(t), p_{i_{AA^{\prime}}})$ is the binomial distribution for multiple interactions of type $i_{AA^{\prime}}$.\par
We can generalize this notion by defining the static probabilistic interaction rates as
\begin{equation}
\label{eq:DerivingWealthModels:StaticProbabilisticParam}
	\beta_{AA^{\prime}} = \frac{M}{\Delta t}\frac{\sum_{i_{AA^{\prime}}\in I_{AA^{\prime}}}E[\tilde{\mathcal{D}}_{i_{AA^{\prime}}}]P_{i_{AA^{\prime}}}(t)}{F(A,t)F(A^{\prime}, t)},
\end{equation}
where $\tilde{\mathcal{D}}_{i_{AA^{\prime}}}$ is the probability distribution of demand for interaction type $i_{AA^{\prime}}$, assuming that the distribution of demands is well-defined (i.e., no fractions of car being sold).\par
The rotation rates $\gamma_{AA^{\prime}}$ can also be distributed according to a probability distribution that is based on the historical data. Recall that rotations describe the reallocation of wealth from one agent category to another due to some individual agents changing their agent categories between time steps. Given that we have a well-defined tokenomic taxonomy, we can look at historical data to see what agents have changed their agent categories and, therefore, we can estimate on average how much wealth is rotated between different agent categories per time step. In this case, the rotation rates are defined as
\begin{equation}
	\gamma_{AA^{\prime}} = \mu_{AA^{\prime}}, \quad \mu_{AA^{\prime}}\in\mathbb{R},
\end{equation}
where $\mu_{AA^{\prime}}$ is the average gross wealth moved from agent category $A$ to $A^{\prime}$ due to rotations of agents from $A$ to $A^{\prime}$.
\subsubsection{Dynamic Deterministic Parameters}
Now, we can add a further generalization to our framework by defining the dynamic deterministic parameters. The dynamic deterministic parameters are the interaction and rotation rates that are time-dependent (i.e., $\beta_{AA^{\prime}}(t)$ and $\gamma_{AA^{\prime}}(t)$). The motivation for introducing time-dependence is that when we work with real-world data it is likely that the wealth redistribution rates will vary in time, in particular as an economy goes through economic cycles. When using the static parameters we assume that economies balance out demands and prices to construct price hyperplanes, but in the case of real-world economies this is generally not the case.\par
The effect of introducing the dynamic deterministic parameters on the simulation has major impacts:
\begin{itemize}
	\item \textbf{Forward Propagation}: Since the parameters can be arbitrary, we could fit our dynamical system on any dynamics with the same number of agent categories and, therefore, fit our model onto any economy. This might be useful when we are changing parameters based on the incoming data to perform analysis of economic activity in an economy, but this is not helpful when we are trying to simulate future results as the dynamic parameters are too flexible.
	\item \textbf{Backward Propagation}: With the dynamic parameters we can perfectly fit our model on the empirical data as at every time step we can change the rates appropriately to link wealth distributions at different time steps. However, this is overfitting of the model, and the rates we obtain this way are not useful for making any predictions. Overfitting can be prevented using different techniques such as adding penalty terms to the loss function of the numerical solver, bootstrapping results of multiple simulations with random initial positions, or even splitting the data into compartments (e.g., split economy data by the timestamps of policy implementations) and training the model on individual compartments. We will leave the discussion of simulation engine design, model fitting, and methods used for prevention of overfitting to our future paper.
\end{itemize}
In order for the interaction  rates to become dynamic we must lift the constraint that we imposed when defining static deterministic interaction rates - that the vector of demands and the vector of prices will balance out the wealth functions of agent categories $A$ and $A^{\prime}$. Dropping this assumption results in the dynamic interaction rates defined as
\begin{equation}
\label{eq:DerivingWealthModels:DynamicDeterministicParam}
	\beta_{AA^{\prime}}(t) = \frac{M}{\Delta t}\frac{\sum_{i_{AA^{\prime}}\in I_{AA^{\prime}}}\iota(i_{AA^{\prime}},t)}{F(A,t)F(A^{\prime},t)} \quad \forall t\in T.
\end{equation}
The rotation rates become dynamic by assuming that for a rotation of wealth from $A$ to $A^{\prime}$ there exists a discrete function $g_{AA^{\prime}}:T\rightarrow\mathbb{R}$ well-defined $\forall t\in T$ such that
\begin{equation}
	\gamma_{AA^{\prime}}(t) = g_{AA^{\prime}}(t).
\end{equation}
Note that since we removed the rebalancing constraint, in the forward propagation we are free to change the function $\iota(i_{AA^{\prime}},t)$ inside our simulation. This means that both $D_{i_{AA^{\prime}}}(t)$ and $P_{i_{AA^{\prime}}}(t)$ can be arbitrary deterministic functions and, therefore, we can choose any deterministic models to simulate these in the forward propagation with the dynamic deterministic parameters extension.
\subsubsection{Dynamic Probabilistic Parameters}
The last case of parameter modification is the dynamic probabilistic parameters, where interaction and rotation rates are time-dependent and are defined by stochastic processes. The motivation for this procedure is that we can create stochastic processes that satisfy the distribution of empirical data at different time steps, and we can feed these processes inside the simulation in order to gauge what an economy may look like.\par
Compared to the dynamic deterministic parameters, we now have explainable parameters that are based on the distribution of historical data. This allows for a good risk assessment tool, where the economic dynamics is simulated multiple times, so that we can see what are the possible ``bad'' cases for the economy, how they may play out, and what are the impacts of these cases on different agent categories. However, this means that the backward propagation is fruitless in this case as the stochastic processes will ensure that every time we run the dynamical system the dynamic of the simulated wealth distribution will be different.\par
The static probabilistic parameters were defined using the probability distribution $\tilde{\mathcal{D}}_{i_{AA^{\prime}}}$ that simulates the demands at each time step, and then the interaction rates are kept constant by rebalancing the varying demands with prices and wealth functions. We again lift the rebalancing constraint, so that the interaction rates become
\begin{equation}
	\beta_{AA^{\prime}}(t) = \frac{M}{\Delta t}\frac{\sum_{i_{AA^{\prime}}\in I_{AA^{\prime}}}\tilde{\mathcal{D}}_{i_{AA^{\prime}},t}P_{i_{AA^{\prime}}}(t)}{F(A,t)F(A^{\prime},t)},
\end{equation}
where $\{\tilde{\mathcal{D}}_{i_{AA^{\prime}},t}\}_{t\in T}$ is the stochastic process that simulates the demand for the interactions of interaction type $i_{AA^{\prime}}$, and $\tilde{\mathcal{D}}_{i_{AA^{\prime}},t}$ is the random variable that follows this stochastic process.\par
However, lifting the rebalancing constraint also allows for price to become a stochastic process $\{\tilde{\mathcal{P}}_{i_{AA^{\prime}},t}\}_{t\in T}$ for some interaction type $i_{AA^{\prime}}$, with $\tilde{\mathcal{P}}_{i_{AA^{\prime}},t}$ being the random varibale following this process. We classify the interaction rates to be dynamic and probabilistic if either the price, demand, or both are described by stochastic processes. Therefore, the following interaction rates are also defined as dynamic probabilistic parameters:
\begin{equation}
	\beta_{AA^{\prime}}(t) = \frac{M}{\Delta t}\frac{\sum_{i_{AA^{\prime}}\in I_{AA^{\prime}}}D_{i_{AA^{\prime}}}(t)\tilde{\mathcal{P}}_{i_{AA^{\prime}},t}}{F(A,t)F(A^{\prime},t)},
\end{equation}
where demand is a deterministic function and the price is described by the stochastic process, and
\begin{equation}
	\beta_{AA^{\prime}}(t) = \frac{M}{\Delta t}\frac{\sum_{i_{AA^{\prime}}\in I_{AA^{\prime}}}\tilde{\mathcal{D}}_{i_{AA^{\prime}},t}\tilde{\mathcal{P}}_{i_{AA^{\prime}},t}}{F(A,t)F(A^{\prime},t)},
\end{equation}
where both, demand and price are defined by different stochastic processes. Note that is applicable only to forward propagation, as the backward propagation will not work with the dynamic probabilistic parameters.\par
Since all the stochastic processes used are not necessarily the same, all of them have to be simulated and each must be $\lvert T\rvert$ steps long, which are then used inside of the dynamical system to simulate the wealth distribution in an economy.\par
Just like with static probabilistic interaction rates, we have to make sure that the values simulated by the stochastic processes ``make sense'' so that we don't get infeasible demands (e.g., half a car being sold).\par
The rotation rates themselves become stochastic processes and are defined as
\begin{equation}
	\gamma_{AA^{\prime}}(t) = G_{AA^{\prime},t},
\end{equation}
where $\{G_{AA^{\prime},t}\}$ is the stochastic process that simulates the gross wealth redistributed from agent category $A$ to $A^{\prime}$, and $G_{AA^{\prime},t}$ is the random variable associated with it.
\subsection{Relationship to Existing Wealth Distribution Models}
In this section we demonstrate how our framework is related to the existing wealth distribution models. To proceed we must first state that it is common for the wealth distribution to be studied from the perspective of agent-based models where individual agents transact with one another resulting in wealth redistribution.\par
As we have seen in the previous section, we can redefine the dynamical system so that it has dynamic parameters,
\begin{equation}
\label{eq:DerivingWealthModels:GeneralEquation}
	\frac{\Delta}{\Delta t}[\vv{F}(t)] = \frac{1}{M}\vv{F}(t)\odot[\mathcal{B}(t)\cdot\vv{F}(t)]+\Gamma(t)\cdot\vv{F}(t),
\end{equation}
where the matrix of interaction rates $\mathcal{B}(t)$ is
\begin{equation}
	\mathcal{B}(t) =
	\begin{bmatrix}
		0 & \beta_{A_{1}A_{2}}(t) & \hdots & \beta_{A_{1}A_{n}}(t) \\
		\beta_{A_{2}A_{1}}(t) & 0 & \hdots & \beta_{A_{2}A_{n}}(t) \\
		\vdots & \vdots & \ddots & \vdots \\
		\beta_{A_{n}A_{1}}(t) & \beta_{A_{n}A_{2}}(t) & \hdots & 0\\
	\end{bmatrix},
\end{equation}
and the matrix of rotation rates $\Gamma(t)$ is
\begin{equation}
	\Gamma(t) =
	\begin{bmatrix}
		-\gamma_{A_{1}}(t) & \gamma_{A_{2}A_{1}}(t) & \hdots & \gamma_{A_{n}A_{1}}(t) \\
		\gamma_{A_{1}A_{2}}(t) & -\gamma_{A_{2}}(t) & \hdots & \beta_{A_{n}A_{2}}(t) \\
		\vdots & \vdots & \ddots & \vdots \\
		\gamma_{A_{1}A_{n}}(t) & \beta_{A_{2}A_{n}}(t) & \hdots & -\gamma_{A_{n}}(t)\\
	\end{bmatrix}.
\end{equation}
This new generalized definition of the dynamical system can help us with the task of connecting DeTEcT to existing wealth distribution models, as they tend to use dynamic parameters.\par
Also, in the context of our framework, let us assume that the agent categories contain only one agent, such that each individual agent has his own agent type (i.e., $\forall a\in\Lambda!\exists A\in E_{t}$ such that $\{a\}=A$, and in general $\lvert E_{t}\rvert=\lvert\Lambda\rvert$). Having unique agent categories for every individual agent also implies that rotations are not defined in this scenario as agents cannot change agent categories, so the matrix of rotation rates is a zero matrix (i.e., $\Gamma(t)=0_{n\times n}$).\par
\begin{remark}
	The assumption that there is only one agent per agent category is not necessary for the wealth distribution models we are about to describe; what's necessary is the assumption that there are no rotations defined between the agent categories. The reason we assume that there is only one agent per agent category is for the demonstration of how our framework integrates with the existing wealth distribution models, where the notion of agent category is not defined. In this section we refer to $A_{j}$ as an individual agent, since it is a set $A_{j}=\{a_{j}\}$ with the cardinality $\lvert A_{j}\rvert=1$.
\end{remark}
The notions of interaction types and interaction quantities are also exclusive to DeTEcT; in previous literature on wealth distribution models, the net wealth redistributed between agents is directly defined by one transaction at time step $t$. Without the loss of generality, let the net wealth redistributed be defined as
\begin{equation}
	\Delta F_{AA^{\prime}}(t)\coloneqq\sum_{i_{AA^{\prime}}\in I_{AA^{\prime}}}\iota(i_{AA^{\prime}},t),
\end{equation}
where $A$ and $A^{\prime}$ are agent categories that contain one agent each. $\Delta F_{AA^{\prime}}(t)$ is the net wealth that has been transacted between individual agents $a\in A$ and $a^{\prime}\in A^{\prime}$ at time step $t\in T$.\par
With these limitations, we can rewrite the equation \ref{eq:DerivingWealthModels:GeneralEquation} so that the dynamical system becomes
\begin{equation}
\label{eq:DerivingWealthModels:ExistingModelsEquation}
	\frac{\Delta}{\Delta t}[\vv{F}(t)] = \frac{1}{M}\vv{F}(t)\odot[\mathcal{B}(t)\cdot\vv{F}(t)],
\end{equation}
with the individual interaction rates now being given by
\begin{equation}
\label{eq:DerivingWealthModels:ExistingModelsBeta}
	\beta_{AA^{\prime}}(t) = \frac{M}{\Delta t}\frac{\Delta F_{AA^{\prime}}(t)}{F(A,t)F(A^{\prime},t)}.
\end{equation}
We can further simplify the equation \ref{eq:DerivingWealthModels:ExistingModelsEquation}, by applying the finite difference method. If there are $n$ agents in the economy, we have that $\lvert\Lambda\rvert=n=\lvert E_{t}\rvert$ implying that the vector $\vv{F}(t)$ is an $n$-dimensional vector of wealth functions for every agent. Switching from vector notation to index notation and applying the finite difference method, we obtain a system with $n$ dynamical equations
\begin{equation}
	F(A_{j},t) = F(A_{j},t-\Delta t)+\frac{\Delta t}{M}[F(A_{j},t-\Delta t)\sum^{n}_{k\neq j}\beta_{A_{j}A_{k}}(t-\Delta t)F(A_{k},t-\Delta t)].
\end{equation}
Using equation \ref{eq:DerivingWealthModels:ExistingModelsBeta} we simplify the equation above to
\begin{equation}
\label{eq:DerivingWealthModels:GeneralizedTransactionRule}
	F(A_{j},t) = F(A_{j},t-\Delta t)+\sum^{n}_{k\neq j}\Delta F_{A_{j}A_{k}}(t-\Delta t),
\end{equation}
which is a version of the dynamical system that is often considered in the literature and studies on wealth distribution \cite{KineticTheoryModelsForTheDistributionOfWealth}.\par
Equation \ref{eq:DerivingWealthModels:GeneralizedTransactionRule} describes the redistribution of wealth between agents in the economy from the perspective of individual transactions, and is referred to as the \emph{transaction rule}. The terms $\Delta F_{A_{j}A_{k}}(t-\Delta t)$ are the net transactions that take place between agents such that the invariance of wealth assumption is satisfied,
\begin{equation}
	M = \sum_{A\in E_{t}}F(A,t).
\end{equation}
Without the loss of generality, we can pick a time interval $\Delta t$ such that an individual agent performs at most only one transaction in this interval. This further reduces the problem, where the transaction rule is simplified to become
\begin{equation}
\label{eq:DerivingWealthModels:TransactionRule}
	F(A_{j},t) = F(A_{j},t-\Delta t)+\Delta F_{A_{j}A_{k}}(t-\Delta t),
\end{equation}
and the change in wealth of the counterparty $A_{k}$, is defined as
\begin{equation}
\label{eq:DerivingWealthModels:TransactionRuleCounterparty}
	F(A_{k},t) = F(A_{k},t-\Delta t)+\Delta F(A_{k}A_{j})(t-\Delta t).
\end{equation}
Note that for the invariance of wealth to be satisfied, the net change in wealth is antisymmetric,
\begin{equation}
	\Delta F_{A_{j}A_{k}}(t-\Delta t) = -\Delta F_{A_{k}A_{j}}(t-\Delta t).
\end{equation}
With the reformulated transaction rule defined in by equations \ref{eq:DerivingWealthModels:TransactionRule} and \ref{eq:DerivingWealthModels:TransactionRuleCounterparty}, along with the antisymmetric constraint on net change in wealth, we can demonstrate how existing wealth distribution models are derived from our framework.
\subsubsection{Wealth Distribution Model with No Saving}
First model we consider is the wealth distribution model with no saving, introduced by Dragulescu et al. \cite{StatisticalMechanicsOfMoney}. This model assumes that an agent transacts a random portion of his wealth at time step $t$.\par
Given the transaction rule defined in equations \ref{eq:DerivingWealthModels:TransactionRule} and \ref{eq:DerivingWealthModels:TransactionRuleCounterparty}, the net change in wealth in this model is defined to be
\begin{equation}
	\Delta F(A_{j}A_{k})(t-\Delta t) = \bar{\epsilon}_{A_{j}A_{k}}F(A_{k},t-\Delta t)-\epsilon_{A_{j}A_{k}}F(A_{j},t-\Delta t),
\end{equation}
where $\epsilon\sim U(0,1)$ is a random proportion of the combined wealth of $A_{j}$ and $A_{k}$, that the agent $A_{j}$ will have after the transaction. Given this definition of net change of wealth, the transaction rule can be rewritten as
\begin{equation}
\label{eq:DerivingWealthModels:TransactionRuleBoltzmann}
\begin{split}
	& F(A_{j},t) = \epsilon_{A_{j}A_{k}}(F(A_{j},t-\Delta t)+F(A_{k},t-\Delta t)) \\
	& F(A_{k},t) = \bar{\epsilon}_{A_{j}A_{k}}(F(A_{j},t-\Delta t)+F(A_{k},t-\Delta t)),
\end{split}
\end{equation}
where $\bar{\epsilon}_{A_{j}A_{k}}$ is the complimentary fraction of $\epsilon_{A_{j}A_{k}}$ (i.e., $\epsilon_{A_{j}A_{k}}+\bar{\epsilon}_{A_{j}A_{k}}=1$).\par
This model describes the wealth distribution dynamics through random interactions, similar to a kinetic model of gas particles interacting inside a closed container. The simulation of this model produces an equilibrium of wealth distribution (i.e., the attractor of the dynamical system describing the wealth distribution) and as shown by Dragulescu et al. \cite{StatisticalMechanicsOfMoney} it fits to the Boltzmann distribution (else known as the Gibbs distribution)
\begin{equation}
\label{eq:DerivingWealthModels:NoSavingBoltzmann}
	g(A_{j}) = \frac{1}{\langle F\rangle}e^{-\frac{F(A_{j})}{\langle F\rangle}},
\end{equation}
with
\begin{equation}
	\langle F\rangle = \frac{M}{n},
\end{equation}
where $M$ is the maximum supply, and $n$ is the number of agents in the economy (i.e., $n=\lvert\Lambda\rvert$). In thermodynamic terms, the \emph{average temperature} of the system is the average wealth $\langle F\rangle$ of an agent in the economy \cite{KineticTheoryModelsForTheDistributionOfWealth}, and the \emph{energy of the state} is the wealth $F(A_{j})$ of an agent $A_{j}$. Note that we dropped the time-dependence of the wealth function $F$ given that the distribution of wealth between agents reaches an equilibrium and since the system attained its attractor the change of wealth over time is negligible. The equation \ref{eq:DerivingWealthModels:NoSavingBoltzmann} is the probability density function that defines the probability that an agent has wealth $F(A_{j})$.\par
An interesting feature of the Boltzmann distribution in equation \ref{eq:DerivingWealthModels:NoSavingBoltzmann} is its ``robustness'' with respect to different factors such as initial conditions or multi-agent interactions, which do not impact the accuracy of the fit of the Boltzmann distribution over the equilibrium of the wealth distribution obtained from the trading rule in the equation \ref{eq:DerivingWealthModels:TransactionRuleBoltzmann}.\par
In terms of economic modelling this result implies that if the transactions in the economy are random and are more akin to gambling, then the majority of wealth will belong to a small number of individuals and most agents in the economy will be poor. Also, due to the robustness of Boltzmann distribution, we can expect that no matter the initial state, the economy where agents act according to the transaction rule \ref{eq:DerivingWealthModels:TransactionRuleBoltzmann} will always result in having small number of ``rich'' and large number of ``poor'' agents.\par
The reason this model is said to have ``no saving'' is because $\epsilon_{A_{j}A_{k}}$ can be 0 or 1, which leads to an agent obtaining all the wealth from the counterparty. In this context, \emph{saving} is defined as the proportion of wealth that an agent is guaranteed not to transact regardless of the value of the parameter $\epsilon_{A_{j}A_{k}}$. We will consider the implementation of the ``saving parameter'' (i.e., \emph{saving propensity}) later in this section.\par
\subsubsection{Wealth Distribution Model with Minimum Investment and No Saving}
The model proposed by Chakraborti \cite{DistributionsOfMoneyInModelMarketsOfEconomy} introduces a concept of minimum transaction value, $F_{min}$, as a means to simulate a system where agents ``invest'' the same wealth with the outcome of the investment being the wealth each agent receives from this joint investment pool. As in the previous model, the outcome of the investment (i.e., transaction) is random and agents do not save any proportion of their wealth, risking it instead in a form of an investment.\par
The net change in wealth is defined as
\begin{equation}
	\Delta F_{A_{j}A_{k}}(t-\Delta t) = (2\epsilon_{A_{j}A_{k}}-1)F_{min},
\end{equation}
where $\epsilon_{A_{j}A_{k}}\sim U(0,1)$ as before, and 
\begin{equation}
	F_{min}=min(F(A_{j},t-\Delta t),F(A_{k},t-\Delta t))
\end{equation}
is the minimum investment that both agents $A_{j}$ and $A_{k}$ make, which is equivalent to the wealth of the agent with the smaller wealth (e.g., if $F(A_{j},t-\Delta t)<F(A_{k},t-\Delta t)$ then the minimum investment is $F_{min}=F(A_{j},t-\Delta t)$).\par
Applying this definition of net change in wealth to the general transaction rule (equations \ref{eq:DerivingWealthModels:TransactionRule} and \ref{eq:DerivingWealthModels:TransactionRuleCounterparty}), we get
\begin{equation}
\begin{split}
	& F(A_{j},t) = F(A_{j},t-\Delta t)+(2\epsilon_{A_{j}A_{k}}-1)\times min(F(A_{j},t-\Delta t),F(A_{k},t-\Delta t)) \\
	& F(A_{k},t) = F(A_{k},t-\Delta t)+(2\epsilon_{A_{j}A_{k}}-1)\times min(F(A_{j},t-\Delta t),F(A_{k},t-\Delta t)).
\end{split}
\end{equation}
The dynamics of this system is unique in the sense that over the course of the simulation, agents will loose their wealth such that $F_{min}=0$, which means they cannot invest any more or participate in the transactions. This implies that the agents are being ``driven out of the market'' once they run out of wealth to transact.\par
This peculiarity of the model leads to the wealth distribution equilibrium to be described by a power-law (i.e., $g(A_{j}, t-\Delta t)\sim F(A_{j},t-\Delta t)^{-v}$ with some exponent parameter $v$ for a given time step $t-\Delta t$) with an exponentially falling tail (i.e., $g(A_{j},t-\Delta t)\sim e^{-\alpha F(A_{j},t-\Delta t)}$ with some parameter $\alpha$ for a given time step $t-\Delta t$), and as $t\rightarrow\infty$ all agents apart from one are driven out of the market, while that one agent holding all maximum supply $M$ (i.e., $\lim_{t\rightarrow\infty}F(A_{j}, t)=M$). From the numerical simulations, it is estimated that for $t=15,000,000$ more than 99\% of agents are driven out of the market \cite{DistributionsOfMoneyInModelMarketsOfEconomy}.\par
Note that here we did not drop the time-dependence of the wealth function $F$ since the probability density function is defined at the time steps leading to the equilibrium, but not at the equilibrium. At the equilibrium only one agent has all the wealth.\par
\subsubsection{Wealth Distribution Model with Global Saving \protect\\ Propensities}
In this model, a \emph{global saving propensity} $\lambda\in(0,1)$ is introduced for the purpose of modelling the wealth distribution. The global saving propensity is the proportion of wealth that each agent will save before transacting. The range of saving propensity is taken to be between 0 and 1 implying that an agent cannot save all wealth (i.e., $\lambda=1$) or invest all wealth (i.e., $\lambda=0$). It is also assumed that the saving propensity is independent from time or any other parameters.\par
For this model, the net change of wealth is
\begin{equation}
	\Delta F_{A_{j}A_{k}}(t-\Delta t) = (1-\lambda)[\bar{\epsilon}_{A_{j}A_{k}}F(A_{j},t-\Delta t)-\epsilon_{A_{j}A_{k}}F(A_{k},t-\Delta t)]
\end{equation}
where $\epsilon_{A_{j}A_{k}}\sim U(0,1)$ with its complimentary fraction $\bar{\epsilon}_{A_{j}A_{k}}$. Under this definition, the transaction rule becomes
\begin{equation}
\label{eq:DerivingWealthModels:TransactionRuleGamma}
\begin{split}
	& F(A_{j},t) = \lambda F(A_{j},t-\Delta t)+\epsilon_{A_{j}A_{k}}(1-\lambda)(F(A_{j},t-\Delta t)+F(A_{k},t-\Delta t)) \\
	& F(A_{k},t) = \lambda F(A_{k},t-\Delta t)+\bar{\epsilon}_{A_{j}A_{k}}(1-\lambda)(F(A_{j},t-\Delta t)+F(A_{k},t-\Delta t)).
\end{split}
\end{equation}
The equilibrium of the wealth distribution dynamics in this system is described by the Gamma distribution, which is derived by Chakraborti et al. \cite{StatisticalMechanicsOfMoneySavingPropensity}. For an effective dimension $D_{\lambda}$ defined by
\begin{equation}
	\frac{D_{\lambda}}{2} = \frac{1+2\lambda}{1-\lambda},
\end{equation}
and the temperature defined by the relation
\begin{equation}
\label{eq:DerivingWealthModels:EquationEquipartitionTheorem}
	T_{\lambda} = \frac{a\langle F\rangle}{D_{\lambda}} = \frac{1-\lambda}{1+2\lambda}\langle F\rangle
\end{equation}
through the equipartition theorem, the probability density function of ``reduced'' wealth $\xi(A_{j})=\frac{F(A_{j})}{T_{\lambda}}$ is
\begin{equation}
	g(\xi(A_{j})) = \frac{1}{\Gamma(D_{\lambda}/2)}\xi(A_{j})^{\frac{D_{\lambda}}{2}-1}e^{-\xi(A_{j})} = \gamma_{\frac{D_{\lambda}}{2}}(\xi(A_{j})).
\end{equation}
$\langle F\rangle$ is the average wealth of an agent in the economy, and $\gamma_{\frac{D_{\lambda}}{2}}$ is the Gamma distribution of order $\frac{D_{\lambda}}{2}$.\par
The saving propensity in the range $\lambda\in(0,1)$ constraints the effective dimension to $2<D_{\lambda}$. For integer and half-integer values of the shape parameter $\frac{D_{\lambda}}{2}$, the probability density $g(\xi(A_{j}))$ becomes Maxwell-Boltzmann distribution at the \emph{temperature} $T_{\lambda}$ in a $D_{\lambda}$-dimensional space \cite{KineticTheoryModelsForTheDistributionOfWealth}.\par
The temperature $T_{\lambda}$ in this model describes the fluctuation of agent's wealth around the average value $\langle F\rangle$. By the equipartition theorem (i.e., equation \ref{eq:DerivingWealthModels:EquationEquipartitionTheorem}), if the saving propensity monotonically increases, the temperature monotonically decreases implying agents' wealth fluctuates less with higher $\lambda$ according to the Gamma distribution.\par
Therefore, for an economy where agents are transacting randomly, but with only a predefined proportion of their wealth, the wealth distribution will depend on the saving propensity, and the shape of the wealth distribution density function will depend on the shape parameter $\frac{D_{\lambda}}{2}$ of Gamma distribution. The shape parameter defines the skewness ($\tilde{\mu}_{3}=\frac{2\sqrt{2}}{\sqrt{D_{\lambda}}}$) and excess kurtosis ($\mathcal{K}=\frac{12}{D_{\lambda}}$) of the resulting wealth distribution; note that $1<\frac{D_{\lambda}}{2}$ because of the choice of the range for the saving propensities $\lambda\in(0,1)$, the wealth distribution will always have a unimodal shape. As the shape parameter $\frac{D_{\lambda}}{2}\rightarrow\infty$ (i.e., $\lambda\rightarrow1$), the final wealth distribution converges to the normal distribution with mean $\frac{D_{\lambda}}{2}$, which implies that the wealth is equitably distributed between the agents in the economy around the mean value, and this is consistent with the fact that as the saving propensity goes up, the agents in the economy risk less wealth in the random transactions which leads to less wealth ``redistribution''. On the other hand, for relatively low values of the shape parameter $\frac{D_{\lambda}}{2}\searrow0$ (i.e., $\lambda\searrow0$) the final wealth distribution will have positive skewness and kurtosis, leading to a wealth distribution to be left-leaning and implying that the majority of agents in the economy will be ``poor'' while there will also be a few very ``rich'' agents. This is consistent with our expectation that small saving propensity $\lambda$ results in significant wealth ``redistribution'' and creation of ``rich'' and ``poor'' agents in the economy.\par
\subsubsection{Wealth Distribution Model with Individual Saving \protect\\ Propensities}
The shortcoming of the wealth distribution model with the global saving propensity is the assumption that every agent's saving propensity is the same. In the real-world, agents are likely to have different risk tolerance, which means their saving propensities will differ. Therefore, a wealth distribution with \emph{individual saving propensities} is introduced, where every agent $A_{j}$ has his own saving propensity $\lambda_{j}$. Just like in the previous model, the saving propensities $\lambda_{j}$ are assumed to be constant.\par
The net change in wealth is defined as
\begin{equation}
	\Delta F_{A_{j}A_{k}}(t-\Delta t) = \bar{\epsilon}_{A_{j}A_{k}}(1-\lambda_{j})F(A_{j},t-\Delta t)-\epsilon_{A_{j}A_{k}}(1-\lambda_{k})F(A_{k},t-\Delta t),
\end{equation}
and the associated transaction rule is
\begin{equation}
\begin{split}
	& F(A_{j},t) = \lambda_{j}F(A_{j},t-\Delta t)+\epsilon_{A_{j}A_{k}}[(1-\lambda_{j})F(A_{j},t-\Delta t)+(1-\lambda_{k})F(A_{k},t-\Delta t)] \\
	& F(A_{k},t) = \lambda_{k}F(A_{k},t-\Delta t)+\bar{\epsilon}_{A_{j}A_{k}}[(1-\lambda_{j})F(A_{j},t-\Delta t)+(1-\lambda_{k})F(A_{k},t-\Delta t)].
\end{split}
\end{equation}
This transaction rule leads to the equilibria being described by different probability densities, and the choice of the probability density depends on the configuration of the individual saving propensities $\lambda_{j}$. However, if a system is simulated many times with different individual saving propensity configurations, the average across the equilibria that the system has attained is distributed according to the Pareto distribution \cite{MoneyInGasLikeMarkets}
\begin{equation}
	g(F(A_{j})) =
		\begin{cases}
			\frac{\alpha(F_{min})^{\alpha}}{F(A_{j})^{\alpha+1}} \quad F(A_{j})\geq F_{min} \\
			0 \quad F(A_{j})<F_{min}
		\end{cases},
\end{equation}
where the parameter $\alpha$ (i.e., \emph{Pareto exponent}) is set to 1, while $F_{min}$ is the minimum wealth an agent in the economy can have (e.g., in an economy where we cannot create  credit/debit - we have no  debt, asserting that $0\leq F(A_{j})$, so the parameter $F_{min}$ can be set to zero despite the assumption that wealth is always positive as the probability that $F(A_{j})=0$ is negligible).\par
This is an interesting result since the Pareto exponent $\alpha=\frac{log_{10}5}{log_{10}4}\approx1.161$ leads to the Pareto principle else known as the 80-20 law. This principle states that the 80\% of wealth belongs to the 20\% of agents, which coincides with the empirical data in the real-world economies \cite{HumanDevelopmentReport}.\par
For the purpose of wealth distribution modelling, this result implies that given an economy where agents are free to choose what fraction of their wealth they want to transact away randomly, we will find that after a number of iterations the wealth in the economy will be distributed according to the Pareto principle. However, this result is subject to the uniform distribution of risk averseness (i.e., saving propensities) between the agents, and is subject to the number of risk takers (i.e., agents with low individual saving propensity) in the economy.\par

\section{Dynamic Money Supply Extension}
\label{sec:DynamicMoneySupply}
In section \ref{sec:RelatedWork:DeTEcT} we mentioned that we assume the maximum supply to be constant, and for some real-world token economies (e.g., Bitcoin \cite{Bitcoin}) this assumption is satisfied. However, there are plenty of token economies that do not have a capped maximum supply and can, in practice, mint infinite number of tokens (e.g., Ethereum \cite{Ethereum}).
\subsection{Dynamic Money Supply Models}
To model wealth distribution in token economies where maximum supply is not constant, we must define an extension to our framework. Maximum supply can increase or decrease, and in the context of this paper we refer to these processes as incrementation and decrementation respectively. We define some of the possible models of dynamic maximum supply in Table \ref{table:DynamicMoneySupply:ModelTaxonomy} and will elaborate on these below (note that the taxonomy is not exhaustive as there can be infinite number of functions that model the change of maximum supply).\par
\begin{table}[h!]
\centering
\begin{tabular}{ |c|c|c|c| } 
\hline
\textbf{Name} & \textbf{Change in $M$} & \textbf{Deterministic} & \textbf{Monotonic} \\
\hline
\begin{tabular}{@{}c@{}c@{}}Simple \\ Incrementation \\ (Decrementation)\end{tabular} & \begin{tabular}{@{}c@{}}Linear Increase \\ (Decrease)\end{tabular} & Yes & Yes \\
\hline
\begin{tabular}{@{}c@{}c@{}}Compound \\ Incrementation \\ (Decrementation)\end{tabular} & \begin{tabular}{@{}c@{}}Exponential Increase \\ (Decrease)\end{tabular} & Yes & Yes \\
\hline
\begin{tabular}{@{}c@{}c@{}}Stochastic \\ Incrementation \\ (Decrementation)\end{tabular} & \begin{tabular}{@{}c@{}}Average Increase \\ (Decrease)\end{tabular} & No & No \\
\hline
\end{tabular}
\caption{Taxonomy of Dynamic Maximum Supply}
\label{table:DynamicMoneySupply:ModelTaxonomy}
\end{table}
\subsubsection{Deterministic Maximum Supply Models}
We start by looking at the simplest model from Table \ref{table:DynamicMoneySupply:ModelTaxonomy}, which is the simple incrementation and decrementation. This model linearly changes the maximum supply such that at every time step $t\in T$ the maximum supply increases or decreases by a constant value. This maximum supply model causes the maximum supply to be in the range $(-\infty,\infty)$.\par
In our framework, we can't have a zero maximum supply as the generalized equation \ref{eq:RelatedWork:DSGeneralizedConstantParams} will be undefined. Moreover, negative maximum supply doesn't make sense in practice, so we must introduce a constraint on what the input rate of incrementation (decrementation) we can set in order to ensure the range of maximum supply stays in the $(0,\infty)$ range. Below we define the simple incrementation (decrementation model).\par
\begin{definition}
\label{DefinitionSimpleIncrementation}
	\textbf{Simple incrementation (decrementation)} is the incrementation (decrementation) of maximum supply with the same amount of wealth at every time step. At time $t$, maximum supply with simple incrementation is the mapping $M:T\rightarrow(0,\infty)$ such that
	\begin{equation}
		M(t) = (1+rt)M_{initial},
	\end{equation}
where $M_{initial}\in\mathbb{R}_{>0}$ is the initial maximum supply at time $t_{initial}$, and $r\in(-\frac{1}{t},\infty)\subseteq\mathbb{R}$ for all $t\in T$ is the incrementation (decrementation) rate.
\end{definition}
The compound incrementation (decrementation) model changes the maximum supply by adding (subtracting) a constant percentage of the maximum supply from the previous time step  $t-\Delta t$. Over time, this results in the exponential rise (fall) in maximum supply, with the its range being $(-\infty,\infty)$. Yet again, we must ensure that the range of maximum supply in this model remains in the ``reasonable'' $(0,\infty)$ range by readjusting the rate parameter of this model.\par
\begin{definition}
\label{DefinitionCompoundIncrementation}
	\textbf{Compound incrementation (decrementation)} is the incrementation (decrementation) of maximum supply with the exponentially growing (shrinking) amount of wealth at every time step. At time $t$, maximum supply with compound incrementation is the mapping $M:T\rightarrow(0,\infty)$ such that
	\begin{equation}
		M(t) = (1+r)^{t}M_{initial},
	\end{equation}
where $M_{initial}\in\mathbb{R}_{>0}$ is the initial maximum supply at time $t_{initial}$, and $r\in(-1,1)\subseteq\mathbb{R}$ is the incrementation (decrementation) rate.
\end{definition}
\subsubsection{Stochastic Maximum Supply Models}
The stochastic incrementation and decrementation model introduces a stochastic process $\{R_{t}\}_{t\in T}$ that changes the maximum supply in time. These changes are not monotonic but we consider that there exists a time series of expected values for this process.\par
\begin{definition}
	\textbf{Stochastic incrementation (decrementation)} is the average incrementation (decrementation) of maximum supply with the discrete-time stochastic process $\{R_{t}\}_{t\in T}$ with the property that $0<R_{t}$ for every $t\in T$ (e.g., geometric Brownian motion). At time $t$, maximum supply with stochastic incrementation is the mapping $M:T\rightarrow(0,\infty)$ such that
	\begin{equation}
		M(t) = R_{t}M_{initial},
	\end{equation}
where $M_{initial}\in\mathbb{R}_{>0}$ is the initial maximum supply at time $t_{initial}$, and $R_{t}$ is the stochastic process with a range $\in(0,\infty)$ and the expected value time series $E(R_{t})=\mu_{R}(t)$, such that under a transformation $t\rightarrow t^{\prime}$ the expected value transforms as $\mu_{R}(t)\rightarrow\mu_{R}(t^{\prime})$.
\end{definition}
\subsection{The Time Translation Symmetry and The Discount Factor}
In physical systems, the concept of continuous symmetry, or invariance, of the system is deeply associated with the conservation laws, and in particular, the Noether's theorem \cite{NoethersTheorem} states that for every continuous symmetry that a physical system exhibits, it must have a corresponding conservation law. This statement can be reformulated to state that if a system has a symmetry, then there will be a quantity, or quantities, that are conserved (e.g., in most physical systems, energy is the conserved quantity with respect to time translation symmetry).\par
The consequence of the Noether's theorem is that there must exist a vector $J$ (i.e., \emph{conserved current}) associated with the conserved quantity, that satisfies $\frac{\partial}{\partial t}J+\nabla J=0$ (i.e., \emph{continuity equation}). The conserved current becomes a property of the system, which can aid in solving the system or verifying a proposed solution.\par
In finance and economics, we implicitly use time translation symmetry to compare values within the time series. For example, portfolio theory defines a concept of stochastic discounting factor (SDF), which allows us to connect time series of returns. In discounting models, we use a discounting rate to compare nominal cashflows values that are separated in time. In general, we often employ a discounting factor to provide us with a point of reference that allows us to compare values of the time series, and connect them from one time step to another.\par
In the context of our framework, we would like to explicitly demonstrate that under certain conditions, the discounting factor emerges as the property of the modelled system, rather than an assumed definition.The objective of this section is to demonstrate that an economic system whose maximum supply can be described using the models above, has a \emph{time translation symmetry} and that this symmetry gives rise to the conservation principle stated in the equations \ref{eq:RelatedWork:ConstantMaxSupply} and \ref{eq:RelatedWork:ConservationPrinciple}.\par
By time translation symmetry we mean that there exists an invariant quantity that will be conserved for any point $t\in T$. In equations \ref{eq:RelatedWork:ConstantMaxSupply} and \ref{eq:RelatedWork:ConservationPrinciple}, the time translation symmetry is manifested through having no discounting factor as the maximum supply $M$ is constant. Given the maximum supply models defined in the section above we can demonstrate that these maximum supply models satisfy the conservation law, and we will also show that for a generic function satisfying certain constraints, the economic system will also have a time translation symmetry, and therefore, a discounting factor.\par
\begin{theorem}[\textbf{Time Translation Symmetry}]
\label{TimeTranslationSymmetry}
	If maximum supply is constant or has simple, compound, or stochastic incrementation (decrementation), there exists a time translation symmetry in the economy.
\end{theorem}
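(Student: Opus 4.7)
The plan is to prove the theorem case by case by exhibiting, for each of the four maximum supply models, an explicit \emph{discount factor} $\delta(t)$ such that $M(t)/\delta(t)$ is invariant under any time shift $t\to t'$, and then arguing that this invariance constitutes the claimed time translation symmetry. The strategy mirrors the Noether-style reasoning sketched earlier in the section: the symmetry is diagnosed via a time-invariant quantity, which in turn yields a conservation law generalising equations \ref{eq:RelatedWork:ConstantMaxSupply} and \ref{eq:RelatedWork:ConservationPrinciple} from the original constant-supply regime to the dynamic-supply regime.

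First I would treat the constant case as a warm-up: set $\delta(t)=1$, so that $M(t)/\delta(t)=M$ is trivially invariant, immediately recovering the original conservation principle. For the simple incrementation case I would take $\delta(t)=1+rt$, whereupon Definition \ref{DefinitionSimpleIncrementation} gives $M(t)/\delta(t)=M_{initial}$ for every $t\in T$; since $M_{initial}$ carries no time dependence, it is unchanged under any $t\to t'$. The compound case proceeds identically with $\delta(t)=(1+r)^{t}$ drawn from Definition \ref{DefinitionCompoundIncrementation}. In each deterministic case the invariance of the ratio yields the conservation principle
\[
\frac{1}{\delta(t)}\sum_{A\in E_{t}}F(A,t) = M_{initial} \quad \forall t\in T,
\]
and dividing a discrete-time increment through by $\delta(t)$ recovers an analogue of equation \ref{eq:RelatedWork:ConservationPrinciple}.

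The stochastic case is the main obstacle, since $M(t)=R_{t}M_{initial}$ is itself a random variable and no pathwise invariant can be read off directly without further structure. My approach would be to work in expectation: take $\delta(t)=\mu_{R}(t)$ and use the transformation rule $\mu_{R}(t)\to\mu_{R}(t')$ stated in the definition of stochastic incrementation to conclude that $E[M(t)]/\mu_{R}(t)=M_{initial}$ is invariant under $t\to t'$. Alternatively, one can condition on a realised sample path and use the pathwise discount $\delta(t)=R_{t}$, yielding pathwise invariance of $M(t)/R_{t}=M_{initial}$; either route delivers the symmetry at the appropriate expected or pathwise level.

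The hard part will be pinning down what \emph{time translation symmetry} means precisely in this discrete, partly stochastic framework, that is, identifying the admissible shifts $t\to t'$ and the class of quantities required to be preserved, so that the stochastic case is handled on the same footing as the deterministic ones rather than as an afterthought. Once the discount factors are installed the remaining work is routine: substitute into equation \ref{eq:RelatedWork:ConstantMaxSupply}, divide through by $\delta(t)$, and observe that $M_{initial}$ plays the role of the conserved charge associated with the symmetry, exactly as $M$ does in the constant-supply baseline.
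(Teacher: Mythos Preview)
Your approach is genuinely different from the paper's. The paper argues case by case by \emph{induction on the wealth functions}: it assumes $\sum_{j}F(A_{j},t)=M(t)$ at $t$ (and implicitly at $t-\Delta t$, via the backward difference), Taylor-expands each $F(A_{j},\cdot)$ to first order around $t$, and shows the sum at $t'=t+\Delta t$ again equals $M(t')$ for each supply model. Your route is algebraic: read the discount factor $\delta(t)$ straight off the definition of $M(t)$ so that $M(t)/\delta(t)=M_{initial}$ is trivially constant, then divide the conservation identity through by $\delta(t)$.

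The gap in your proposal is the clause ``the invariance of the ratio yields the conservation principle $\frac{1}{\delta(t)}\sum_{A}F(A,t)=M_{initial}$.'' That step presupposes $\sum_{A}F(A,t)=M(t)$ for the \emph{dynamic} supply, but equation~\ref{eq:RelatedWork:ConstantMaxSupply} was stated only for constant $M$; extending it to time-varying $M(t)$ is exactly the content the paper's inductive Taylor argument supplies. As written, your argument establishes an identity about $M(t)$ alone (which is tautological from the definitions) and never touches the $F(A_{j},t)$; the paper's proof, by contrast, works at the level of the wealth functions and checks that their sum continues to track $M(t)$ step to step. If you want to keep your cleaner discount-factor framing, you still need one inductive line showing that $\sum_{A}F(A,t)=M(t)$ propagates from $(t-\Delta t,t)$ to $t'$; otherwise the conservation principle you quote is assumed rather than derived.

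What each approach buys: yours is shorter and surfaces the discount factor as the central object immediately, which is the paper's eventual takeaway anyway. The paper's induction, though heavier and only first-order accurate in $\Delta t$, is what licenses replacing $M$ by $M(t)$ in equation~\ref{eq:RelatedWork:ConstantMaxSupply} before you divide by $\delta(t)$.
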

\begin{proof}
Let the set of $n$ agent categories be
\begin{equation}
	E_{t}=\{A_{1},...,A_{n}\}, \quad n\in\mathbb{N}.
\end{equation}
Without loss of generality, assume that at $t_{initial}$ the wealth is distributed between the agent categories such that
	\begin{equation}
		\sum_{A_{j}\in E_{t}}F(A_{j},t_{initial}) = \sum^{n}_{j=1}F(A_{j},t_{initial}) = M_{initial}, \quad M_{initial}\in\mathbb{R}_{>0},
	\end{equation}
where in the first equality, it is assumed that $\lvert E_{t}\rvert=n$, and $t$ is a label for a pseudo-partition, and not the number of agent categories.\par
	For every $t\in T$ and an interval $\Delta t$, next time iteration is defined as $t^{\prime}=t+\Delta t$. Proceed for each of the cases.
\begin{description}
	\item{\textbf{Case 1}: Constant Maximum Supply}\par
	Assume for arbitrary $t\in T$ the following holds:
	\begin{equation}
	\label{TimeTranslationSymmetry11}
		\sum^{n}_{j=1}F(A_{j},t) = M_{initial}.
	\end{equation}
	It is sufficient to prove that this holds for time $t^{\prime}$.\par
	By discrete differentiation, the differential form of the equation \ref{TimeTranslationSymmetry11} is
	\begin{equation}
	\label{TimeTranslationSymmetry12}
		\sum^{n}_{j=1}\frac{F(A_{j},t)-F(A_{j},t-\Delta t)}{\Delta t} = 0.
	\end{equation}
	By translating in time $t\rightarrow t^{\prime}$ and Taylor expanding up to order $\mathcal{O}(\Delta t^{2})$,
	\begin{equation}
	\begin{split}
		\sum^{n}_{j=1}F(A_{j},t^{\prime}) &= \sum^{n}_{j=1}(F(A_{j},t)+\Delta t\frac{F(A_{j},t)-F(A_{j},t-\Delta t)}{\Delta t}) =\\
		&= \sum^{n}_{j=1}F(A_{j},t) = M_{initial},
	\end{split}	
	\end{equation}
	where for the second equality  the property in the equation \ref{TimeTranslationSymmetry12} was used. Therefore, the statement in the equation \ref{TimeTranslationSymmetry11} is proved by induction for all $t\in T$.
	\item{\textbf{Case 2}: Maximum Supply with Simple Incrementation (Decrementation)}\par
	Given the definition \ref{DefinitionSimpleIncrementation}, assume for arbitrary $t\in T$ and $r\in(-\frac{1}{\tau},\infty)\subseteq\mathbb{R}\text{ }\forall\tau\in T$ the following holds:
	\begin{equation}
	\label{TimeTranslationSymmetry21}
		\sum^{n}_{j=1}F(A_{j},t) = (1+rt)M_{initial}.
	\end{equation}
	It is sufficient to prove that this holds for time $t^{\prime}$.\par
	By discrete differentiation, the differential form of the equation \ref{TimeTranslationSymmetry21} is
	\begin{equation}
	\label{TimeTranslationSymmetry22}
		\sum^{n}_{j=1}\frac{F(A_{j},t)-F(A_{j},t-\Delta t)}{\Delta t} = rM_{initial}.
	\end{equation}
	By translating in time $t\rightarrow t^{\prime}$ and Taylor expanding up to order $\mathcal{O}(\Delta t^{2})$,
	\begin{equation}
	\begin{split}
		\sum^{n}_{j=1}F(A_{j},t^{\prime}) &= \sum^{n}_{j=1}(F(A_{j},t)+\Delta t\frac{F(A_{j},t)-F(A_{j},t-\Delta t)}{\Delta t}) =\\
		&= \sum^{n}_{j=1}F(A_{j},t)+\Delta trM_{initial} =\\
		&= (1+rt)M_{initial}+\Delta trM_{initial} =\\
		&= (1+rt^{\prime})M_{initial},
	\end{split}
	\end{equation}
	where for the second equality the property in the equation \ref{TimeTranslationSymmetry22} was used. Therefore, the statement in the equation \ref{TimeTranslationSymmetry21} is proved by induction for all $t\in T$.
	\item{\textbf{Case 3}: Maximum Supply with Compound Incrementation (Decrementation)}\par
	Given the definition \ref{DefinitionCompoundIncrementation}, assume for arbitrary $t\in T$ and $r\in(-1,1)\subseteq\mathbb{R}$ the following holds:
	\begin{equation}
	\label{TimeTranslationSymmetry31}
		\sum^{n}_{j=1}F(A_{j},t) = (1+r)^{t}M_{initial}.
	\end{equation}
	It is sufficient to prove that this holds for time $t^{\prime}$.\par
	By discrete differentiation, the differential form of the equation \ref{TimeTranslationSymmetry31} is
	\begin{equation}
	\label{TimeTranslationSymmetry32}
		\sum^{n}_{j=1}\frac{F(A_{j},t)-F(A_{j},t-\Delta t)}{\Delta t} = ln(1+r)(1+r)^{t}M_{initial}.
	\end{equation}
	By translating in time $t\rightarrow t^{\prime}$ and Taylor expanding up to order $\mathcal{O}(\Delta t^{2})$,
	\begin{equation}
	\begin{split}
		\sum^{n}_{j=1}F(A_{j},t^{\prime}) &= \sum^{n}_{j=1}(F(A_{j},t)+\Delta t\frac{F(A_{j},t)-F(A_{j},t-\Delta t)}{\Delta t}) =\\
		&= \sum^{n}_{j=1}F(A_{j},t)+\Delta tln(1+r)(1+r)^{t}M_{initial} =\\
		&= (1+r)^{t}M_{initial}+\Delta tln(1+r)(1+r)^{t}M_{initial} =\\
		&= (1+rt^{\prime})M_{initial},
	\end{split}
	\end{equation}
	where for the second equality the property in the equation \ref{TimeTranslationSymmetry32} was used. For the last equality the Taylor expansion of an exponential in general form was used
	\begin{equation}
		a^{x} = e^{ln(a)x} = 1+\frac{xln(a)}{1!}+\mathcal{O}((xln(a))^{2}),
	\end{equation}
	since $r<1$ implying that $ln(1+r)<1$.\par
	Therefore, the statement in the equation \ref{TimeTranslationSymmetry31} is proved by induction for all $t\in T$.
	\item{\textbf{Case 4}: Maximum Supply with Stochastic Incrementation (Decrementation)}\par
	Assume for arbitrary $t\in T$ and stochastic process $\{R_{t}\}_{t\in T}$ the following holds:
	\begin{equation}
	\label{TimeTranslationSymmetry41}
		\sum^{n}_{j=1}F(A_{j},t) = R_{t}M_{initial}.
	\end{equation}
	Expected value is used to stochastically discount the wealth function from one time step to the next,
	\begin{equation}
		E[\sum^{n}_{j=1}F(A_{j},t)] = \sum^{n}_{j=1}F(A_{j},t) = E[R_{t}M_{initial}] = E[R_{t}]M_{initial} = \mu_{R}(t)M_{initial}.
	\end{equation}
	Assuming that under time translations $t\rightarrow t^{\prime}$ the expected value of time series transforms as $\mu_{R}(t)\rightarrow\mu_{R}(t^{\prime})$, it is sufficient to prove that the equality \ref{TimeTranslationSymmetry41} holds for time $t^{\prime}$.\par
	By discrete differentiation, the differential form of the equation \ref{TimeTranslationSymmetry41} is
	\begin{equation}
	\label{TimeTranslationSymmetry42}
		\sum^{n}_{j=1}\frac{F(A_{j},t)-F(A_{j},t-\Delta t)}{\Delta t} = \frac{\Delta\mu_{R}(t)}{\Delta t}M_{initial}.
	\end{equation}
	By translating in time $t\rightarrow t^{\prime}$ and Taylor expanding up to order $\mathcal{O}(\Delta t^{2})$,
	\begin{equation}
	\begin{split}
		\sum^{n}_{j=1}F(A_{j},t^{\prime}) &= \sum^{n}_{j=1}(F(A_{j},t)+\Delta t\frac{F(A_{j},t)-F(A_{j},t-\Delta t)}{\Delta t}) =\\
		&= \sum^{n}_{j=1}F(A_{j},t)+\Delta t\frac{\Delta\mu_{R}(t)}{\Delta t}M_{initial} =\\
		&= \mu_{R}(t)M_{initial}+\Delta t\frac{\Delta\mu_{R}(t)}{\Delta t}M_{initial} =\\
		&= \mu_{R}(t^{\prime})M_{initial},
	\end{split}
	\end{equation}
	where for the second equality the property in the equation \ref{TimeTranslationSymmetry42} was used. For the last equality the Taylor expansion approximation of $\mu_{R}(t)$ was used.\par
	Therefore, the statement in the equation \ref{TimeTranslationSymmetry41} is proved by induction for all $t\in T$.
\item{\textbf{General Case}: General form of the incrementation function.}\par
	Assume for arbitrary $t\in T$ the following holds:
	\begin{equation}
	\label{TimeTranslationSymmetry51}
		\sum^{n}_{j=1}F(A_{j}, t) = g(t),
	\end{equation}
	where $g(t)$ is an infinitely differentiable maximum supply function such that $g(t_{initial})=M_{initial}=g(t_{0})$, and $g(t)$ has a Taylor series approximation. It is sufficient to prove that the equality above holds for time $t^{\prime}$.\par
	By discrete differentiation, the differential form of the equation \ref{TimeTranslationSymmetry51} is
	\begin{equation}
	\label{TimeTranslationSymmetry52}
		\sum^{n}_{j=1}\frac{F(A_{j},t)-F(A_{j},t-\Delta t)}{\Delta t} = \frac{\Delta}{\Delta t}[g(t)] = \frac{g(t)-g(t-\Delta t)}{\Delta t}.
	\end{equation}
	By translating in time $t\rightarrow t^{\prime}$ and Taylor expanding up to order $\mathcal{O}(\Delta t^{2})$,
	\begin{equation}
	\begin{split}	
		\sum^{n}_{j=1}F(A_{j},t^{\prime}) &= \sum^{n}_{j=1}(F(A_{j},t)+\Delta t\frac{F(A_{j},t)-F(A_{j},t-\Delta t)}{\Delta t}) =\\
		&= \sum^{n}_{j=1}F(A_{j}, t)+\Delta t\frac{g(t)-g(t-\Delta t)}{\Delta t} =\\
		&= g(t)+\Delta t\frac{\Delta}{\Delta t}[g(t)] =\\
		&= g(t^{\prime}),
	\end{split}
	\end{equation}
	where we used the infinitely differentiable property of $g(t)$, where its Taylor approximation around $t^{\prime}$ is
	\begin{equation}
		g(t^{\prime}) = g(t) + \Delta t \frac{\Delta}{\Delta t}[g(t^{\prime})].
	\end{equation}
	Therefore, the statement in the equation \ref{TimeTranslationSymmetry51} is proved by induction for all $t\in T$.
\end{description}
\end{proof}
This theorem demonstrates that an economy where maximum supply function is infinitely differentiable the economy will have a time translation symmetry. This result allows us to make a couple of interesting statements about economies modelled with DeTEcT.\par
First, the time translation symmetry allows us to connect the values in the time series of wealth distribution by discounting changes in the maximum supply. By proving the theorem above, we have demonstrated that an economy whose maximum supply is described by the given supply models, the notion of the discounting factor is promoted from the definition to the property of the system. For example, if we assume that an economy has a simple incrementation (decrementation), then the sum of wealth functions at a given time step $t\in T$ is
\begin{equation}
	\frac{1}{1+rt}\sum^{n}_{j=1}F(A_{j},t) = M_{initial},
\end{equation}
where $\frac{1}{1+rt}$ is just the discounting factor at this time step. Therefore, in our framework, the time-value of money is the phenomenon caused by the symmetry associated with the maximum supply evolution, and is derived from the properties of the economy being modelled.\par
Second, we note that we can now add the incrementation term to the general equation of the dynamical system in equation \ref{eq:RelatedWork:DSGeneralizedConstantParams} such that it reflects the dynamic money supply.\par
For us to add the general maximum supply function, $g(t)$, we must assume that $g(t)$ satisfies $g(t_{initial})=M_{initial}=g(t_{0})$, and we also must define incrementation on an ``agent category''-basis, as the incrementation mechanism tells us how much new wealth has been created. But in order to add incrementation to the dynamical system we must specify what agent categories gain (lose) wealth (in most cases, we can assume that the control mechanism will be the only agent category to gain or lose wealth due to the dynamic maximum supply).\par
Let $G(A_{j},t)$ be the amount by which the wealth of agent category $A_{j}$ has been changed at time step $t$. It is required that the sum of changes in wealth of all agent categories at time step $t$ is the difference between the maximum supply of the economy between the last and the current time step,
\begin{equation}
	\sum^{n}_{j=0}G(A_{j},t) = g(t)-g(t-\Delta t).
\end{equation}
We can define a vector of wealth changes, $\vv{G}(t)$, where every entry is the incrementation (decrementation) of wealth of the respective agent category at the given time step,
\begin{equation}
	\vv{G}(t) = (G(A_{1},t), ..., G(A_{n},t))^{T}, \quad A_{1},...,A_{n}\in E_{t}.
\end{equation}
Now, we add the incrementation (decrementation) mechanism to the dynamical system in equation \ref{eq:RelatedWork:DSGeneralizedConstantParams} to obtain its modified version,
\begin{equation}
\label{eq:DynamicMoneySupply:DSGeneralized}
	\frac{\Delta}{\Delta t}[\vv{F}(t)] = \frac{1}{g(t)}\vv{F}(t)\odot[\mathcal{B}\cdot\vv{F}(t)]+\Gamma\cdot\vv{F}(t)+\vv{G}(t) \quad t\in T.
\end{equation}
We can prove that by induction the equation \ref{eq:DynamicMoneySupply:DSGeneralized} holds for future time steps. First, assume that the equation holds for $t_{0}$ and for some arbitrary $t\in T$. Then, for the next time step $t^{\prime}=t+\delta t$, the Taylor expansion of $\vv{F}(t^{\prime})$ is
\begin{equation}
\begin{split}
	\vv{F}(t^{\prime}) &= \vv{F}(t)+\delta t\frac{\Delta}{\Delta t}[\vv{F}(t)]+\mathcal{O}(\delta t^{2}) \approx\\
	&\approx \vv{F}(t) + \delta t[\frac{1}{g(t)}\vv{F}(t)\odot[\mathcal{B}\cdot\vv{F}(t)]+\Gamma\cdot\vv{F}(t)+\vv{G}(t)],
\end{split}
\end{equation}
where we used equation \ref{eq:DynamicMoneySupply:DSGeneralized}, and ignored all terms of $\mathcal{O}(\delta t^{2})$ and higher, as we consider the interval $\delta t$ to be very small.\par
The finite difference of $\vv{F}$ around $t^{\prime}$ is
\begin{equation}
\begin{split}
	\frac{\Delta t}{\Delta t}[\vv{F}(t^{\prime})] &= \frac{\Delta}{\Delta t}[\vv{F}(t)]+\frac{\Delta}{\Delta t}[\delta t\frac{\Delta}{\Delta t}[\vv{F}(t)]] =\\
	&= \frac{1}{g(t)}\vv{F}(t)\odot[\mathcal{B}\cdot\vv{F}(t)]+\Gamma\cdot\vv{F}(t)+\vv{G}(t) +\\
	&+ \delta t\frac{\Delta}{\Delta t}[\frac{1}{g(t)}\vv{F}(t)\odot[\mathcal{B}\cdot\vv{F}(t)]+\Gamma\cdot\vv{F}(t)+\vv{G}(t)] =\\
	&= \frac{1}{g(t)}\vv{F}(t)\odot[\mathcal{B}\cdot\vv{F}(t)]+\delta t\frac{\Delta}{\Delta t}[\frac{1}{g(t)}\vv{F}(t)\odot[\mathcal{B}\cdot\vv{F}(t)]] +\\
	&+ \Gamma\cdot\vv{F}(t)+\delta t\frac{\Delta}{\Delta t}[\Gamma\cdot\vv{F}(t)]+\vv{G}(t)+\delta t\frac{\Delta}{\Delta t}[\vv{G}(t)] =\\
	&= \frac{1}{g(t^{\prime})}\vv{F}(t^{\prime})\odot[\mathcal{B}\cdot\vv{F}(t^{\prime})]+\Gamma\cdot\vv{F}(t^{\prime})+\vv{G}(t^{\prime}),
\end{split}
\end{equation}
which proves that the equation \ref{eq:DynamicMoneySupply:DSGeneralized} holds by induction for all time steps. This equation is the general form of the dynamical system with static parameters and dynamic money supply.
\subsection{Dynamic Money Supply and Parametrization Techniques}
At last, we address the case where dynamic money supply and different parametrization techniques are applied simultaneously. The dynamical system with dynamic maximum supply and dynamic parameters will have the same form as the system in the section above, but with the explicit time-dependence of $\mathcal{B}(t)$ and $\Gamma(t)$,
\begin{equation}
	\frac{\Delta}{\Delta t}[\vv{F}(t)] = \frac{1}{g(t)}\vv{F}(t)\odot[\mathcal{B}(t)\cdot\vv{F}(t)]+\Gamma(t)\cdot\vv{F}(t)+\vv{G}(t) \quad t\in T.
\end{equation}
It can be proven, that this form of the dynamical system stands for the future time step using the same methodology as described in the previous section, but we will alleviate the calculation for the purpose of the paper.

\section{Conclusion}
The aim of this paper was to demonstrate how our proposed framework, DeTEcT \cite{DeTEcT}, fits into theoretical research on wealth distribution models, and how it can be improved to be more flexible for modelling wider range of real-world token economies with different features.\par
In this paper, we described multiple ways that our framework can be parametrized to remain very flexible. Static probabilistic parametrization enables us to model the wealth distribution dynamics based on the real-world data, and dynamic deterministic parametrization can be used to simulate dynamically-changing set of policies in the economy, while static deterministic and dynamic probabilistic parametrizations are convenient for simulating different economic scenarios with static or variable set of policies. In the forward propagation, the dynamic parametrization techniques can be customized further with the choice of deterministic functions, or stochastic processes that define demands and prices for individual goods.\par
Additionally, we introduced a dynamic money supply extension that covers token economies with time-dependent money supply (e.g., Ethereum \cite{Ethereum}). This extension works harmoniously with the parametrization techniques and expands the use cases for DeTEcT.\par
In summary, we have added improvements to our framework, which will help us with building the simulation engine that performs analysis and runs simulations of wealth distribution in real-world token economies, and will allow us to study the interactions between different agents and agent categories. Despite these improvements, this paper does not present an exhaustive list of modifications and features that can be added to DeTEcT (e.g., tokens with expiration mechanism), but we believe that the extensions we have presented here are the ones that carry the most significance for the framework, and demonstrate the approaches in which researchers can modify the framework to use it in different context.

\end{document}